\newtheorem{theorem}{Theorem}[section]
\newtheorem{definition}[theorem]{Definition}
\newtheorem{remark}[theorem]{Remark}
\newtheorem{question}[theorem]{Question}
\newcommand{\AAA}{\mathcal{A}}
\newcommand{\BBB}{\mathcal B}
\newcommand{\eps}{\varepsilon}
\DeclareMathSymbol{\R}{\mathbin}{AMSb}{"52}
\newcommand{\Space}{{\rm Space}}
\newcommand{\polylog}{{\rm polylog}}
\newcommand{\thresh}{{\rm THRESH}}
\title{A Note on Sanitizing Streams with Differential Privacy}
\author{
Haim Kaplan
\and
Uri Stemmer
}
\date{November 26, 2021}
\begin{document}

\maketitle

\begin{abstract}
The literature on {\em data sanitization} aims to design algorithms that take an input dataset and produce a privacy-preserving version of it, that captures some of its statistical properties. In this note we study this question from a streaming perspective and our  goal is to {\em sanitize a data stream}. Specifically, we consider low-memory algorithms that operate on a data stream and  produce an alternative privacy-preserving stream that captures some statistical properties of the original input stream.

\end{abstract}

\section{Introduction and Notations}
Data sanitization is one of the most well-studied topics in the literature of differential privacy. Informally, the goal is to design differentially private algorithms that take a dataset $D$ and produce an alternative dataset $D'$ which captures some statistical properties of $D$. Most of the research on data sanitization focuses on the ``offline'' setting, where the algorithm has full access to the input dataset $S$. We continue the study of this question from a streaming perspective. Specifically, we aim to design low-memory algorithms that operate on a large stream of data points, and produce a privacy-preserving variant of the original stream while capturing some of its statistical properties. We present a simple reduction from the streaming setting to the offline setting. As a special case, we improve on the recent work of Alabi et al.~\cite{ABC21}, who studied a special case of this question.

Let us begin by recalling basic notations from the literature on data sanitization. Let $X$ be a data domain and let $C$ be a class of predicates, where each $c\in C$ maps $X$ to $\{0,1\}$. Given a dataset $D\in X^*$, a {\em sanitization mechanism} for $C$ is required to produce a {\em synthetic dataset} $D'$ such that for every $c\in C$ we have
$$
\frac{1}{|D|}\sum_{x\in D} c(x) \approx \frac{1}{|D'|}\sum_{x\in D'} c(x).
$$
Formally,

\begin{definition}
Let $C$ be a class of predicates mapping $X$ to $\{0,1\}$. Let $\AAA$ be an algorithm that takes an input dataset $D\in X^n$ and outputs a dataset $D'\in X^n$. Algorithm $\AAA$ is an $(\alpha,\beta,\eps,\delta,n)$-sanitizer for $C$, if
\begin{enumerate}
\item $\AAA$ is $(\eps,\delta)$-differentially private (see \cite{DworkMNS16,DworkR14,Vadhan17} for background on differential privacy).
\item For every input $D \in X^n$ and for every predicate $c\in C$
we have \\
$\Pr\limits_{\AAA}\left[
\left|\frac{1}{|D|}\sum_{x\in D} c(x) - \frac{1}{|D'|}\sum_{x\in D'} c(x)\right|\leq\alpha
\right]\geq 1-\beta$. The probability is over the coin tosses of $\AAA$.
\end{enumerate}
\end{definition}

\begin{remark}
It is often convenient to allow the the size of $D'$ to be different than $n$ (the size of the original dataset). For simplicity, here we assume that $\AAA$ produces a dataset $D'$ of the same size as $D$.
\end{remark}

We consider a variant of the above definition where the input of algorithm $\AAA$ is a stream, and its output is also a stream. The utility requirement is that at the end of the stream, the produced stream $S'$ is similar to the the original stream $S$ (in the same sense as above). Formally,

\begin{definition}
Let $C$ be a class of predicates mapping $X$ to $\{0,1\}$. Let $\AAA$ be an algorithm that operates on a stream $S\in X^m$ outputs a stream $S'\in X^m$. Algorithm $\AAA$ is an $(\alpha,\beta,\eps,\delta,m)$-streaming-sanitizer for $C$, if
\begin{enumerate}
\item $\AAA$ is $(\eps,\delta)$-differentially private;
\item For every input stream $S \in X^m$ of length $m$ and for every predicate $c\in C$
we have \\
$\Pr\limits_{\AAA}\left[
\left|\frac{1}{|S|}\sum_{x\in S} c(x) - \frac{1}{|S'|}\sum_{x\in S'} c(x)\right|\leq\alpha
\right]\geq 1-\beta$. The probability is over the coin tosses of $\AAA$.
\end{enumerate}
\end{definition}

\begin{remark}\label{rem:conti}
For simplicity, in the above definition we required utility to hold only at the end of the stream. Our results remain essentially unchanged also with a variant of the definition where utility must hold at any moment throughout the execution. We can also allow the size of the output stream to be different than the size of the input stream.
\end{remark}

\section{A Generic Reduction to the Offline Setting}

We observe that every (offline) sanitizer can be transformed into a streaming-sanitizer as follows.

\begin{theorem}\label{thm:main}
Let $\AAA$ be an $(\alpha,\beta,\eps,\delta,n)$-sanitizer for a class of predicates $C$, with space complexity $\Space(\AAA,n)$. Let  $m=k\cdot n$ for some positive integer $k$. Then, there exists an $(\alpha,k\beta,\eps,\delta,m)$-streaming-sanitizer for $C$ using space $\Space(\AAA,n)$ (we assume that $\AAA$'s space is at least linear).
\end{theorem}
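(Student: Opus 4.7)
The natural plan is to partition the input stream into $k$ consecutive blocks of length $n$, apply the offline sanitizer $\AAA$ independently to each block, and emit the sanitized block of length $n$ as the corresponding portion of the output stream. Since the blocks have total length $kn = m$ and the output blocks each have length $n$, this produces an output stream of the correct length. The algorithm needs only to buffer the current block and to run $\AAA$ on it; under the assumption that $\Space(\AAA,n) \geq n$, the buffering and the emission of the sanitized block fit within $\Space(\AAA,n)$ (we can, for instance, wait until $\AAA$ completes on the current block, then stream out its $n$ output elements while reading the next $n$ inputs into the same buffer).

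For privacy, I would invoke parallel composition of differential privacy. Any two neighboring input streams $S, S'$ differ on a single coordinate, which falls in exactly one of the $k$ blocks. The sanitized outputs of the other $k-1$ blocks are functions of identical inputs and thus identically distributed, while the affected block is an $(\eps,\delta)$-DP function of its input. Hence the joint output is $(\eps,\delta)$-differentially private.

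For utility, let $S = S_1 \circ \cdots \circ S_k$ with each $|S_i| = n$ and let $S'_i$ denote the sanitized block that $\AAA$ produces from $S_i$. By the utility guarantee of $\AAA$ and a union bound over the $k$ blocks, with probability at least $1-k\beta$ we simultaneously have, for every $c\in C$ and every $i\in[k]$,
\[
\left|\tfrac{1}{n}\sum_{x\in S_i} c(x) - \tfrac{1}{n}\sum_{x\in S'_i} c(x)\right| \leq \alpha.
\]
Averaging over $i$ and using that $\tfrac{1}{m}\sum_{x\in S}c(x) = \tfrac{1}{k}\sum_i \tfrac{1}{n}\sum_{x\in S_i} c(x)$ (and similarly for $S'$) together with the triangle inequality yields $\alpha$-accuracy for the whole stream.

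I do not expect any real obstacle here: parallel composition handles the privacy cleanly, and the averaging argument is immediate because all blocks have the same length $n$. The only point requiring care is the space accounting and the fact that we must actually emit a stream (rather than keep a growing internal state), which is why we output each block's synthetic data immediately after $\AAA$ finishes on that block and reuse the buffer for the next block; the ``space at least linear'' assumption ensures this incurs no overhead beyond $\Space(\AAA,n)$.
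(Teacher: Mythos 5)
Your proposal is correct and follows essentially the same route as the paper: partition the stream into $k$ disjoint blocks of length $n$, apply $\AAA$ to each block (giving $(\eps,\delta)$-privacy since the blocks are disjoint), and combine the per-block $\alpha$-accuracy guarantees via a union bound and averaging over equal-length blocks. The extra care you take with buffering and emitting each sanitized block is a reasonable elaboration of the paper's space claim, not a different argument.
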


\begin{remark}
The point here is that, even though the stream is large (of length $m$), the space complexity of the streaming-sanitizer essentially depends only on the  space complexity of the (offline) sanitizer when applied to ``small'' datasets of size $n$. (As Theorem~\ref{thm:main} is stated, the length of the stream $m$ affects the confidence parameter of the resulting streaming-sanitizer; however, in Remark~\ref{rem:rem} we explain how this can be avoided.)
\end{remark}

\begin{proof}[Proof of Theorem~\ref{thm:main}]
We construct a streaming sanitizer $\BBB$ as follows:
\begin{enumerate}
    \item Let $D\in X^n$ denote the next $n$ items in the stream.
    \item Output $D'=\AAA(D)$ and goto Step 1.
\end{enumerate}

First observe $\BBB$ applies $\AAA$ to disjoint portions of its input stream, and hence, algorithm $\BBB$ is $(\eps,\delta)$-differentially private. Next, by a union bound, with probability at least $1-k\beta$, we have that all of the applications of algorithm $\AAA$ succeed in producing a  dataset $D'$ that maintain averages of predicates in $C$ up to an error of $\alpha$. In such a case, for the entire input stream $S$ and output stream $S'$, and for every $c\in C$, it holds that
$$
\sum_{x\in S}c(x)
=
\sum_{i=1}^k \sum_{x\in D}c(x)
\leq
\sum_{i=1}^k \left( \sum_{x\in D'}c(x)+\alpha n \right)
=
\sum_{x\in S'}c(x)+\alpha kn,
$$
and hence
$$
\frac{1}{|S|}\sum_{x\in S}c(x)\leq \frac{1}{|S'|}\sum_{x\in S'}c(x)+\alpha.
$$
The other direction is symmetric.
\end{proof}

\begin{remark}\label{rem:rem}
Two remarks are in order. First, assuming that $m$ is big enough w.r.t.\ $n$, we can avoid blowing up the confidence parameter $\beta$ by $k$. The reason is that, by the Chernoff bound, w.h.p.\ we get that at least $(1-2\beta)$ fraction of the executions of $\AAA$ succeed, in which case the overall error would be at most $\alpha+2\beta$. Second, again assuming that $m$ is big enough, we could relax the privacy guarantees of $\AAA$ while keeping $\BBB$'s privacy guarantees unchanged. Specifically, a well known fact is that we can boost the privacy guarantees of a differentially private algorithm by applying it to a subsample of its input dataset. Hence, assuming that $\AAA$ is $(1,\delta)$-differentially private, we can execute the above algorithm $\BBB$ on a random subsample $\tilde{S}$ from the original input stream $S$ (by selecting each element of $S$ independently with probability $\eps$). Assuming that $m\gtrsim\frac{1}{\alpha^2\eps}$, then the subsampled stream $\tilde{S}$ is big enough such that the additional error introduced by this subsampling is at most $\alpha$.
\end{remark}

\section{Application to Bounded Space Differentially Private Quantiles}
Recently, Alabi et al.~\cite{ABC21} introduced the problem of differentially private quantile estimation with sublinear space complexity. Specifically, let $\alpha$ be an approximation parameter, and consider an input stream $S$ containing $m$ points from a domain $X$. The goal is to design a small-space differentially-private algorithm that, at the end of the stream, is capable of approximating all quantiles in the data up to error $\alpha$ and confidence $\beta$. Specifically, \cite{ABC21} designed a private variant of the streaming algorithm of Greenwald and Khanna \cite{GreenwaldK01}. They obtained an $(
\eps,0)$-differentially private algorithm with space complexity\footnote{We use $\tilde{O}(f)$ to hide $\polylog(f)$ factors.} $\tilde{O}\left(\frac{1}{\alpha\eps}\log(\frac{|X|}{\beta})\log(m)\right)$, which is great because it matches the non-private space dependency of $1/\alpha$. However, the following questions were left open (and stated explicitly as open questions by Alabi et al.~\cite{ABC21}).

\begin{question}
The algorithm of \cite{ABC21} was tailored to the non-private algorithm of Greenwald and Khanna \cite{GreenwaldK01}, which is known to be sub-optimal in the non-private setting. Can we devise a more general approach that would allow us to instantiate (and benefit from) the state-of-the-art non-private algorithms? In particular, can we avoid the dependency of the space complexity in $\log(m)$?
\end{question}

\begin{question}
The space complexity of the algorithm of \cite{ABC21} grows with $\log|X|$. Can this be avoided?
\end{question}

We observe that using our notion of streaming-sanitizers immediately resolves these two questions. To see this, let $X$ be a totally-ordered data domain. For a point $x\in X$, let $c_x:X\rightarrow\{0,1\}$ be a {\em threshold function} defined by $c_x(y)=1$ iff $x\leq y$. Let $\thresh_X=\{c_x : x\in X\}$ denote the class of all threshold functions over $X$. This class captures all quantiles of the original stream. Now, to design a differentially private quantile estimation algorithm (with sublinear space), all we need to do is to apply our generic construction for a streaming-sanitizer for $\thresh_X$ (instantiated with the state-of-the-art offline sanitizer for this class), and to run any non-private streaming algorithm for quantiles estimation on the outcome of the streaming-sanitizer. Using the state-of-the-art offline sanitizer for $\thresh_X$ from \cite{KaplanLMNS20} and the state-of-the-art non-private streaming algorithm of \cite{KarninLL16}, we get an $(\eps,\delta)$-differentially-private quantiles-estimation algorithm using space complexity $\tilde{O}\left(\frac{1}{\alpha\eps}\log(\frac{1}{\beta})\left(\log(\frac{1}{\delta})\log^*|X|\right)^{1.5}\right)$.

\begin{remark}
The above idea is general, and is not restricted to the algorithms of \cite{KaplanLMNS20} and \cite{KarninLL16}. In particular, we  could  have used an $(\eps,0)$-differentially private sanitizer, at the expense of having the space complexity grow with $\log|X|$ instead of $(\log^*|X|)^{1.5}$. See, e.g.,~\cite{BeimelNS16,BunNSV15,KaplanScSt21} for additional constructions of (offline) sanitizers for $\thresh_X$.
\end{remark}

\begin{remark}
As in Remark~\ref{rem:rem}, assuming that the stream length $m$ is big enough, the space complexity can be made independent of $\varepsilon$.
\end{remark}

Alabi et al.~\cite{ABC21} also asked if it is possible to obtain a differentially private streaming algorithm for quantiles estimation that
allows for continually monitoring how the quantiles evolve throughout the stream (rather then only at the end of the stream). Specifically, the algorithm of \cite{ABC21} works in the ``one-shot'' setting
where the quantiles are computed once after observing the entire stream. However, many relevant applications require real-time release
of statistics. Using our notion of a streaming-sanitizer, this comes essentially for free (see Remark~\ref{rem:conti}), because once the stream is private, any post-processing of it satisfies privacy.

\bibliographystyle{alpha}

\begin{thebibliography}{KLM{\etalchar{+}}20}

\bibitem[ABC21]{ABC21}
Daniel Alabi, Omri Ben{-}Eliezer, and Anamay Chaturvedi.
\newblock Bounded space differentially private quantiles.
\newblock In {\em {TPDP}}, 2021.

\bibitem[BNS16]{BeimelNS16}
Amos Beimel, Kobbi Nissim, and Uri Stemmer.
\newblock Private learning and sanitization: Pure vs. approximate differential
  privacy.
\newblock {\em Theory Comput.}, 12(1):1--61, 2016.

\bibitem[BNSV15]{BunNSV15}
Mark Bun, Kobbi Nissim, Uri Stemmer, and Salil~P. Vadhan.
\newblock Differentially private release and learning of threshold functions.
\newblock In {\em {FOCS}}, pages 634--649. {IEEE} Computer Society, 2015.

\bibitem[DMNS16]{DworkMNS16}
Cynthia Dwork, Frank McSherry, Kobbi Nissim, and Adam~D. Smith.
\newblock Calibrating noise to sensitivity in private data analysis.
\newblock {\em J. Priv. Confidentiality}, 7(3):17--51, 2016.

\bibitem[DR14]{DworkR14}
Cynthia Dwork and Aaron Roth.
\newblock The algorithmic foundations of differential privacy.
\newblock {\em Found. Trends Theor. Comput. Sci.}, 9(3-4):211--407, 2014.

\bibitem[GK01]{GreenwaldK01}
Michael Greenwald and Sanjeev Khanna.
\newblock Space-efficient online computation of quantile summaries.
\newblock In {\em {SIGMOD} Conference}, pages 58--66. {ACM}, 2001.

\bibitem[KLL16]{KarninLL16}
Zohar~S. Karnin, Kevin~J. Lang, and Edo Liberty.
\newblock Optimal quantile approximation in streams.
\newblock In {\em {FOCS}}, pages 71--78. {IEEE} Computer Society, 2016.

\bibitem[KLM{\etalchar{+}}20]{KaplanLMNS20}
Haim Kaplan, Katrina Ligett, Yishay Mansour, Moni Naor, and Uri Stemmer.
\newblock Privately learning thresholds: Closing the exponential gap.
\newblock In {\em {COLT}}, volume 125 of {\em Proceedings of Machine Learning
  Research}, pages 2263--2285. {PMLR}, 2020.

\bibitem[KSS21]{KaplanScSt21}
Haim Kaplan, Shachar Schnapp, and Uri Stemmer.
\newblock Differentially private approximate quantiles.
\newblock {\em CoRR}, abs/2110.05429, 2021.

\bibitem[Vad17]{Vadhan17}
Salil~P. Vadhan.
\newblock The complexity of differential privacy.
\newblock In {\em Tutorials on the Foundations of Cryptography}, pages
  347--450. Springer International Publishing, 2017.

\end{thebibliography}
\newcommand{\etalchar}[1]{$^{#1}$}

\end{document}